\documentclass[12pt]{article}
\usepackage{amsthm,amssymb,amsmath}
\usepackage{bm,bbm}
\usepackage{mathtools,etoolbox,xcolor}
\usepackage{enumitem}
\usepackage[margin=1in]{geometry}
\usepackage{authblk}
\usepackage[numbers]{natbib}
\usepackage{booktabs,siunitx}
\usepackage{caption,subcaption}
\usepackage{tikz}
\newtheorem{theorem}{Theorem}

\newtheorem{proposition}[theorem]{Proposition}

\newcommand{\E}{\mathbbm{E}}

\newcommand{\n}{n}
\newcommand{\Y}{Y}

\newcommand{\se}{\hat\sigma}

\newcommand{\w}{w}

\newcommand{\TLR}{W}

\newcommand{\munull}{\mu_{0}}

\newcommand{\mumle}{\hat\mu_{MLE}}
\newcommand{\tausqmle}{\hat\tau^2_{MLE}}
\newcommand{\tausqmlered}{\hat\tau^2_{MLE,R}}

\graphicspath{{./figs/}} 
\newtoggle{commenttoggle}
\togglefalse{commenttoggle}
\mathtoolsset{showonlyrefs,showmanualtags}

\title{Bartlett adjustment for Gaussian random effects meta-analysis }
\author[1]{Haben Michael}
\affil[1]{University of Massachusetts, Amherst, MA (hmichael@math.umass.edu)}
\date{}                   
\begin{document}
\maketitle

Abstract. 
Meta-analyses are often based on too few studies to justify the
asymptotic methods underlying the statistical procedures applied to
them. We consider higher-order asymptotics as a remedy.  We derive the
Bartlett correction for the idealized Gaussian case, correcting the
formula currently appearing in the literature.
\\

Meta-analysis is a technique for synthesizing a body of related scientific results into a summary conclusion. Suppose each of $\n$ studies, all measuring the same underlying phenomenon,  produces an estimate $\Y_i$ and standard error $\se_i$. In the likelihood ratio approach to meta-analysis \citep{hardy:96},
the data are assumed to be independent and conditionally Gaussian,
\begin{align}
  \Y_i \mid \se_i^2 \sim N(\mu,\tau^2+\se_i^2),\qquad \tau^2>0
  \label{mod:Gaussian}
\end{align}
That is, conditionally on $\se_1^2,\ldots,\se_\n^2$, $(\Y_1,\ldots,\Y_\n)$ has log-likelihood
\begin{align}
    l(\mu,\tau^2) = -\frac12\sum_{i=1}^n\log(2\pi(\se_i^2+\tau^2)) - \frac12\sum_{i=1}^n\frac{(\Y_i-\mu)^2}{\se_i^2+\tau^2}.
\end{align}
The likelihood ratio (``LR'') statistic for testing $\mu=\munull$ is 
\begin{align}
    \label{defn:W}
  \TLR =  2(l(\mumle,\tausqmle) - l(\munull,\tausqmlered)),
\end{align}
  where $(\mumle,\tausqmle)$ maximizes $l(\mu,\tau^2)$ and  $\tausqmlered$ maximizes
  $\tau^2\mapsto l(\munull,\tau^2)$.

\citep{hardy:96} proposed using the LR statistic so that the variance component $\tau^2$ would be estimated simultaneously with $\mu$, whereas previous methods did not take into account estimation error of the former. The approach also has the benefit of converging at a $1/\n$ rate.

A challenge in carrying out a meta-analysis is that the number of available studies is often small \citep{davey:11}. Efficiency of the statistical procedures is therefore especially valuable. For inference based on the LR, Bartlett correction \citep{bartlett:37,lawley:56,barndorff:88} offers the tantalizing prospect of improving convergence rates from $1/\n$ to $1/\n^2$.

\begin{proposition}\label{prop:Gaussian-Bartlett}
  Suppose model \eqref{mod:Gaussian} holds. Then the Bartlett correction for the LR statistic \eqref{defn:W} is given by
  \begin{align}
    c=1+\frac{2\sum_{i=1}^n\w_i^{3}}{\sum_{i=1}^n\w_i\sum_{i=1}^n\w_i^2} - \frac{\sum_{i=1}^n\w_i^2}{2(\sum_{i=1}^n\w_i)^2},
    \qquad\text{where }w_i=(\se_i^2+\tausqmlered)^{-1}.
  \end{align}
  That is, under model \eqref{mod:Gaussian},    $\TLR/c$ converges to a chi-squared distribution at an $O(1/\n^2)$ rate.
\end{proposition}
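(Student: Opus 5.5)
We compute $\E[\TLR]$ under the null to order $O(1/\n)$ with the Lawley (1956) expansion and read off $c=\E[\TLR]/1$ (one tested parameter); the $O(1/\n^2)$ accuracy of $\TLR/c$ then follows from Barndorff-Nielsen and Hall's theorem that Bartlett adjustment of a likelihood ratio statistic in a regular model improves the chi-squared approximation to $O(\n^{-2})$. Throughout we condition on $\se_1^2,\ldots,\se_\n^2$, so the $Y_i$ are independent but not identically distributed. The Lawley formula only uses joint cumulants of log-likelihood derivatives, so it still applies provided the weights $w_i$ stay bounded and bounded away from zero.

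\textbf{Cumulants.} Write $\theta=(\mu,\tau^2)$ and $v_i=\se_i^2+\tau^2$, and use $S_k=\sum_i v_i^{-k}$ evaluated at the true $\tau^2$. Because the model is Gaussian, all the needed cumulants come out in closed form.
\begin{itemize}[label={}]
\item \emph{Information.} The Fisher information is $\kappa_{\mu\mu}=-S_1$, $\kappa_{\tau\tau}=-\tfrac12 S_2$, and $\kappa_{\mu\tau}=0$. So $\mu$ and $\tau^2$ are orthogonal.
\item \emph{Third and fourth derivatives.} We need the third and fourth derivative moments $\kappa_{rst},\kappa_{rstu}$ and the mixed cumulants $\kappa_{rs,t}$, $\kappa_{r,s,t}$, $\kappa_{rs,tu}$, etc. For example, $\partial^3 l/\partial\mu^2\partial\tau^2=\sum_i v_i^{-2}$, and $\kappa_{\mu\mu\mu}=0$. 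Parity in $Y_i-\mu$ kills every cumulant with an odd number of $\mu$ indices.
\item \emph{Derivatives of cumulants.} We also need $\kappa_{rs}^{(t)}=\partial\kappa_{rs}/\partial\theta_t$ and $\kappa_{rs}^{(tu)}$. For instance, $\kappa_{\mu\mu}^{(\tau)}=S_2$ and $\kappa_{\tau\tau}^{(\tau)}=S_3$.
\end{itemize}
Every entry is a constant multiple of some $S_k$, for $k\le 4$.

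\textbf{Assembly.} Lawley's result gives $\E[\TLR]=1+\epsilon_{p}-\epsilon_{p-1}$. Here $\epsilon_p$ is the full-model term built from the quantities $l_{rstu}$ and $l_{rstuvw}$ summed with the inverse information, and $\epsilon_{p-1}$ is the same expression for the null model, where only $\tau^2$ is free. Orthogonality makes the inverse information diagonal, so in the difference only the index patterns involving $\mu$ at least once survive. The result will be a rational function of $S_1,\ldots,S_4$, and the target is
\[
\frac{2S_3}{S_1S_2}-\frac{S_2}{2S_1^2}.
\]
Two sanity checks should reproduce this.
\begin{itemize}[label={}]
\item \emph{Equal variances.} With $\se_i^2\equiv\sigma^2$ the formula gives $c=1+2/\n-1/(2\n)=1+3/(2\n)$. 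This should match the known Bartlett factor for the one-sample $t$-type LR test of a normal mean with unknown variance, which is $1+3/(2\n)$.
\item \emph{Correcting the literature.} Computing the correction term by term exposes exactly where the published formula errs.
\end{itemize}

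\textbf{Plug-in.} Replacing the true $\tau^2$ by $\tausqmlered$ changes $c$ only by $O_p(\n^{-3/2})$. This is a standard step that preserves the $O(\n^{-2})$ accuracy, since $\tausqmlered$ is consistent under the null and depends on $\TLR$ only at higher order; it gives the stated $w_i$.

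\textbf{Main obstacle.} The hardest part is the bookkeeping in Lawley's formula with two parameters: getting the index sums, signs and factors of $1/4$, $1/6$ right. To control this, we will instead reduce to the profile likelihood for $\mu$, or equivalently use the DiCiccio–Stern formula for the orthogonal-nuisance case. There $\E[\TLR]-1$ is expressed through the one-dimensional cumulants of $\mu$ plus correction terms $\kappa_{\mu\mu\tau}$, $\kappa_{\mu\mu\tau\tau}$ weighted by $1/\kappa_{\tau\tau}$. The result will be cross-checked numerically by simulation for a few sets of unequal $\se_i$.
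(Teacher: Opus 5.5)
Your ingredients are right, but the proposal never carries out the computation that constitutes the proof. The cumulants you list are correct, parity and orthogonality do reduce $\epsilon_2-\epsilon_1$ to index patterns involving $\mu$, and Lawley's formula applied to these cumulants does give the stated $c$. Yet no Lawley sum is evaluated. You name the target, say the assembly ``will'' produce it, and then in the last paragraph switch to another method (profile likelihood/DiCiccio--Stern) that is also not carried out. Your checks cannot fill this in:
\begin{itemize}
\item The homoscedastic limit rules out the $1+2S_3/(S_1S_2)$ of \cite{noma:11}. But it is equally satisfied by, e.g., $1+\tfrac32 S_2/S_1^2$ or $1+\tfrac32 S_3/(S_1S_2)$. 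So it does not certify the heteroscedastic combination $2S_3/(S_1S_2)-S_2/(2S_1^2)$, which is exactly what the proposition asserts.
\item A simulation is not a proof.
\end{itemize}
For comparison, the paper does not redo Lawley's expansion. It invokes the orthogonal-parameter matrix formula of \cite{cordeiro:93}, repairs the $b\to d$ typo, and computes the single missing quantity $b=\tfrac34\kappa_{\mu\mu\tau}-\kappa^{(\tau)}_{\mu\mu}=-\tfrac14 S_2$, taking everything else from \cite{noma:11}. Your route, if completed, would be a self-contained check that does not depend on diagnosing that typo.

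To complete it, you additionally need $\kappa_{\tau\tau\tau}=2S_3$, $\kappa_{\mu\mu\tau\tau}=-2S_3$ and $\kappa^{(\tau)}_{\mu\mu\tau}=-2S_3$. You must also respect Lawley's sign convention. Your $\kappa_{\mu\mu}=-S_1$ and $\kappa_{\tau\tau}=-\tfrac12 S_2$ are minus the Fisher information, and $\kappa^{rs}$ inverts that matrix, so $\kappa^{\mu\mu}=-1/S_1$ and $\kappa^{\tau\tau}=-2/S_2$. Using the positive inverse information flips the sign of every term with three $\kappa^{rs}$ factors, which gives $5/(2n)$ in the homoscedastic case. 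With the correct convention:
\begin{itemize}
\item The four-index terms containing $\mu$ sum to $\kappa^{\mu\mu}\kappa^{\tau\tau}\bigl[\tfrac14\kappa_{\mu\mu\tau\tau}-\kappa^{(\tau)}_{\mu\mu\tau}+\tfrac14\kappa_{\tau\tau\mu\mu}\bigr]=2S_3/(S_1S_2)$.
\item In the six-index sum, write $(r,s,t,u,v,w)=(a,a,b,b,c,c)$. The patterns $(a,b,c)=(\tau,\mu,\tau)$ and $(\tau,\tau,\mu)$ contribute $\pm 2S_3/(S_1S_2)$ and cancel.
\item The patterns $(\mu,\mu,\tau)$, $(\mu,\tau,\mu)$ and $(\tau,\mu,\mu)$ give $-\frac{2}{S_1^2S_2}S_2^2\bigl(\tfrac16-\tfrac56+\tfrac5{12}\bigr)=S_2/(2S_1^2)$, which enters $\epsilon$ with a minus sign.
\item All $S_4$ terms cancel against the null model.
\end{itemize}
That calculation is what your proposal is missing.

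A minor point: your plug-in justification is too loose. An $O_p(n^{-3/2})$ change in $c$ by itself only guarantees $O(n^{-3/2})$ accuracy. The $O(n^{-2})$ claim for $c$ evaluated at $\tausqmlered$ rests on the $n^{-3/2}$ term vanishing by symmetry; the paper also takes this for granted.
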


See Figure \ref{fig:Gaussian} for an illustration. This formula for
the Bartlett correction corrects that given in \cite{noma:11}, which
omits the third of the three terms, ultimately due to a typo present
in
\cite{cordeiro:93}. 
A crude check on this formula is to compute its limit as $\se_i^2\to 0$,
$i=1,\ldots,\n$, i.e., as the heteroscedastic Gaussian model turns
into the homoscedastic Gaussian model. In this limit the correction
given above becomes $1+3/(2\n)$, matching the well-known Bartlett
correction used in testing for the mean of independent and identically
distributed Gaussian data with unknown variance, e.g., \cite[Example
9.27]{cox:79}. By contrast, the version of the correction proposed in
\cite{noma:11} would imply a correction of $1+2/\n$ for the
homoscedastic Gaussian model.

\begin{figure}
  \centering
  \includegraphics[scale=.5]{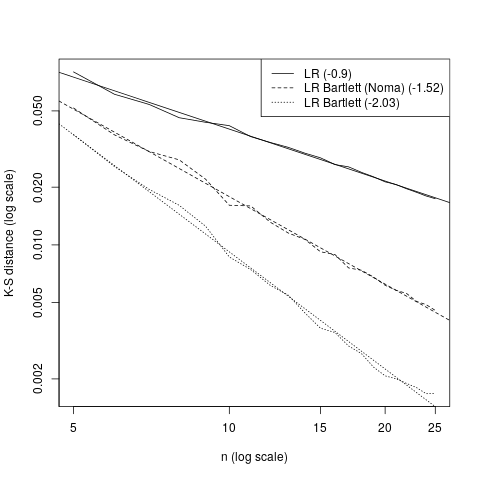}
  \caption{Bartlett correction in the conditional Gaussian model \eqref{mod:Gaussian}. The figure is a log-log plot of the Kolmogorov-Smirnov
    distance between the observed test statistic and the chi-squared
    distribution versus the number of studies $\n=5,\ldots,25$. The
    slopes of the fitted lines given in parentheses approximate the
    convergence rate exponent. The unadjusted LR is approximately $O(1/\n)$ and
    becomes $O(1/\n^2)$ after Bartlett correction. The incomplete
    adjustment of \cite{noma:11} leads to a rate somewhere in
    between.}
  \label{fig:Gaussian}
\end{figure}

 {
\newcommand{\Tc}{T_c}
\newcommand{\sd}{\sigma}
\newcommand{\eps}{\epsilon}
\newcommand{\asym}{dF_{\Tc}(\sqrt{x}+\frac{x^{3/2}}{4\n}) - dF_{\Tc}(-\sqrt{x}-\frac{x^{3/2}}{4\n})}

\begin{proof}[Proof of Proposition \ref{prop:Gaussian-Bartlett}]
  As verified by \cite{noma:11}, the parameters $\mu$ and $\tau^2$ are
  orthogonal under the conditional Gaussian model \eqref{mod:Gaussian}. Therefore the simplified matrix
  approach to Bartlett correction given in \cite{cordeiro:93}
  applies. The intended formula for the correction, which
  \cite{noma:11} relies on, is given by equation (10) of
  \cite{cordeiro:93}. The formula, as written there, is
  \begin{equation}
\begin{split}
        \label{proof:homoscedastic-Bartlett:bad-formula}
    c &= 1 + k_{\alpha,\alpha}^{-1}(\mathbbm{1}^TM\mathbbm{1}+a^TK_{\beta\beta}^{-1}d+\frac{1}{4}(\mathbbm{1}^T(K_{\beta\beta}^{-1}\ast B)\mathbbm{1})^2)\\
    &+k_{\alpha,\alpha}^{-2}(\alpha_1+a^TK_{\beta\beta}^{-1}d+\alpha_2\mathbbm{1}^T(K_{\beta\beta}^{-1}\ast B)\mathbbm{1})
     +k_{\alpha,\alpha}^{-3}\alpha_3.
    \end{split}
      \end{equation}
      Several transcription/typographical errors appear upon comparing equation (10) of \cite{cordeiro:93} with the displays immediately preceding it.  Repairing these, the formula becomes
  \begin{align}
    c &= 1 + k_{\alpha,\alpha}^{-1}(\mathbbm{1}^TM\mathbbm{1}+a^TK_{\beta\beta}^{-1}d+\frac{1}{4}(\mathbbm{1}^T(K_{\beta\beta}^{-1}\ast B)\mathbbm{1})^2)\\
      &+k_{\alpha,\alpha}^{-2}(\alpha_1+a^TK_{\beta\beta}^{-1}b+\alpha_3\mathbbm{1}^T(K_{\beta\beta}^{-1}\ast B)\mathbbm{1})
        +k_{\alpha,\alpha}^{-3}\alpha_2.
  \end{align}
  The material typo is $b\to d$ in the term $a^TK_{\beta\beta}^{-1}b$. As \cite{noma:11} shows, $d=0$ in the Gaussian model, so that term
  does not appear in the formula for the correction if \eqref{proof:homoscedastic-Bartlett:bad-formula} is used. When one substitutes
  \begin{align}
    b = \frac{3}{4}\E(\frac{d^3l(\mu,\tau^2)}{d\mu d\mu d(\tau^2)})
    - \frac{d}{d\tau^2}\E(\frac{d^2l(\mu,\tau^2)}{d\mu d\mu})
    =-\frac{1}{4}\sum_{i=1}^n\frac{1}{(\tau^2+\sigma_i^2)^2},
  \end{align}
  keeping everything else from \cite{noma:11}, one obtains the formula given in the proposition.
  
\end{proof}

\bibliographystyle{chicago}
\bibliography{ma-equivalence.bib}

\end{document}